\theoremstyle{plain}
\newtheorem{thm}{Theorem}[section]
\newtheorem{lem}[thm]{Lemma}
\theoremstyle{definition}
\theoremstyle{remark}
\setlist[enumerate,1]{leftmargin=2em}
\def\e{\varepsilon}
\begin{document}

\title[An algorithm to evaluate the spectral expansion]{An algorithm to evaluate the spectral expansion}

\author{Hau-Wen Huang}
\address{
Department of Mathematics\\
National Central University\\
Chung-Li 32001 Taiwan
}
\email{hauwenh@math.ncu.edu.tw}

\begin{abstract}
Assume that $X$ is a connected $(q+1)$-regular undirected graph of finite order $n$. Let $A$ denote the adjacency matrix of $X$. Let $\lambda_1=q+1>\lambda_2\geq \lambda_3\geq \ldots \geq \lambda_n$ denote the eigenvalues of $A$.  The spectral expansion of $X$ is defined by 
$$
\Delta(X)=\lambda_1-\max_{2\leq i\leq n}|\lambda_i|.
$$
By the Alon--Boppana theorem, when $n$ is sufficiently large, $\Delta(X)$ is quite high if 
$$
\mu(X)=q^{-\frac{1}{2}}
\max_{2\leq i\leq n}|\lambda_i|
$$ is close to $2$. In this paper, with the inputs $A$ and a real number $\e>0$ we design an algorithm to estimate if $\mu(X)\leq 2+\e$ in $O(n^\omega \log \log_{1+\e} n )$ time, where $\omega<2.3729$ is the exponent of matrix multiplication.
\end{abstract}
\maketitle

{\footnotesize{\bf Keywords:} dynamic programming, geodesic cycle, Ihara zeta function, spectral expansion.}

%{\footnotesize{\bf ACM:} G.2.1, G.2.2.}
{\footnotesize{\bf MSC2020:} 05C30, 05C48, 05C50.}

\allowdisplaybreaks

\section{Introduction}

%The expander graphs are sparse but highly connected graphs. 

Throughout this paper we adopt the following conventions: 
The function $\log$ without a base refers to base $2$.
The notation $\omega$ stands for the exponent of matrix multiplication. 
The improvements of the Coppersmith--Winograd algorithm \cite{CWalgorithm:1990,CWalgorithm:2012,CWalgorithm:2014} indicate that 
$\omega<2.3729$. 
Let $q \geq 1$ and $n \geq 2$ be two integers.
We assume that $X$ is a connected $(q+1)$-regular undirected graph with $n$ vertices. 
Let $A$ denote the adjacency matrix of $X$. 
Let 
$$
\lambda_1=q+1>\lambda_2\geq \lambda_3\geq \ldots \geq \lambda_n
$$
denote the eigenvalues of $A$. 

Roughly speaking, the graph $X$ is a {\it good expander} if the parameter $q$ is low and the expansion parameters are high. Such graphs are widely studied in mathematics and have several applications to computer science such as 
networks, error-correcting codes and probabilistic algorithms \cite{expander2006,Terras2010,Ramanujan2003,Expander1994, Expander1986}. 
There are two common expansion parameters called the spectral expansion and the edge expansion. The {\it spectral expansion} of $X$ is measured by the spectral gap of $X$:
$$
\Delta(X)=\lambda_1-\max_{2\leq i\leq n} |\lambda_i|.
$$
Given any two sets $S,T$ of vertices of $X$, let 
$E(S,T)$ denote the set of all edges with one vertex in $S$ and the other vertex in $T$. The {\it edge expansion} of $X$ is given by 
$$
h(X)=\min_{1\leq |S|\leq n/2}\frac{|E(S,\overline{S})|}{|S|}
$$
where $\min$ is over all sets $S$ of vertices of $X$ with $1\leq |S|\leq \frac{n}{2}$. The expander mixing lemma \cite{Expander-Mix1988} implies that $|E(S,T)|$ will be closer to the expected number $\frac{(q+1)|S| |T|}{n}$ for all sets $S,T$ of vertices of $X$, provided that $\Delta(X)$ is as large as possible. In \cite{edge-ratio1984} an inequality states that  
$$
\frac{q-\lambda_2+1}{2}\leq h(X)\leq \sqrt{2(q+1)(q-\lambda_2+1)}.
$$
We set 
$$
\mu(X)=
q^{-\frac{1}{2}}\max_{2\leq i\leq n} |\lambda_i|.
$$
The Alon--Boppana theorem \cite{Alon-Boppana1991,Alon-Boppana1993} implies that 
$$
\mu(X)\geq q^{-\frac{1}{2}} \lambda_2 \geq 2-o(1)
$$
where the term $o(1)$ is a quantity that tends to $0$ for every fixed $q$ as $n$ approaches $\infty$. Thus, when $n$ is sufficiently large, the expansion parameters $\Delta(X)$ and $h(X)$ are quite high provided that $\mu(X)$ is close to $2$.

To obtain $\mu(X)$, one may compute $\lambda_i$ for all $i=2,3,\ldots,n$ via the QR algorithm or Jacobi eigenvalue algorithm. Each iteration step of both algorithms has $O(n^3)$ time complexity \cite{Demmel1997}. In this paper, we introduce a certain number sequence $\{H_k\}_{k=1}^\infty$ and investigate its connection with $\mu(X)$. Furthermore, we design an algorithm to estimate if $\mu(X)$ is close to $2$ via the sequence $\{H_k\}_{k=1}^\infty$ in $o(n^3)$ time. 
To define $\{H_k\}_{k=1}^\infty$ we begin by recalling the notation of geodesic cycles. We endow two opposite orientations on all edges of $X$ called the {\it oriented edges} of $X$.
Recall that a {\it walk}
$$
P=(e_1,e_2,\ldots,e_k)
$$ 
is a sequence of oriented edges of $X$ such that the end vertex of $e_i$ is the start vertex of $e_{i+1}$ for all $i=1,2,\ldots,k-1$. 
A walk $P$ is called a {\it cycle} if the start vertex of $e_1$ is the end vertex of $e_k$. 
A walk $P$ is said to have {\it backtracking} if $e_{i+1}$ is the oriented edge opposite to $e_i$ for some $i=1,2,\ldots,k-1$. A walk is {\it backtrackless} if it has no backtracking. 
A cycle is {\it geodesic} if its all shifted cycles are backtrackless \cite{Winnie2018,Winnie2019}. Let $N_k$ denote the number of geodesic cycles on $X$ of length $k\geq 1$. The numbers $\{N_k\}_{k=1}^\infty$ first appeared in the {\it Ihara zeta function} $Z(u)$ of $X$ which is defined as the analytic continuation of
\begin{gather*}
\exp
\left(\sum_{k=1}^\infty \frac{N_k}{k} u^k
\right).
\end{gather*}
The Ihara zeta function $Z(u)$ of $X$ was first considered by Y. Ihara \cite{Ihara:1966} in the context of discrete groups. As suggested by J.-P. Serre, $Z(u)$ has a graph-theoretical interpretation \cite{Serre1980}. 
The graph $X$ is called {\it Ramanujan} \cite{Ramanujan1988} whenever 
$$
\max_{|\lambda_i|<q+1}|\lambda_i|\leq 2 q^{\frac{1}{2}}
$$
where $\max$ is over all $i=1,2,\ldots,n$ with $|\lambda_i|<q+1$.
It was discovered by T. Sunada \cite{Sunada:1986} that $Z(u)$ satisfies an analogue of the Riemann hypothesis if and only if $X$ is a Ramanujan graph. 
A recent result \cite{Huang:Li_criterion} provides a necessary and sufficient condition for $X$ as Ramanujan in terms of Hasse--Weil bounds on $\{N_k\}_{k=1}^\infty$. Inspired by \cite{Huang:Li_criterion} we consider the numbers 
\begin{gather}\label{Hk}
H_k
=
\left\{
\begin{array}{ll}
2(n-1)+q^\frac{k}{2}+q^{-\frac{k}{2}}-q^{-\frac{k}{2}} N_k
\qquad 
&\hbox{if $k$ is odd},
\\
2(n-1)+q^\frac{k}{2}+q^{-\frac{k}{2}}-q^{-\frac{k}{2}} (N_k-n(q-1))
\qquad 
&\hbox{if $k$ is even}
\end{array}
\right.
\end{gather}
for all integers $k\geq 1$.

The paper will proceed as follows. In \S\ref{s:compute_NH} we design a dynamic programming algorithm that computes $N_k$ in $O(n^\omega \log k)$ time. Moreover we modify the algorithm to compute $H_k$ with the same time complexity. In \S\ref{s:proof} we relate $\{H_k\}_{k=1}^\infty$ to $\mu(X)$. In \S\ref{s:algorithm},  with the inputs $A$ and a real number $\e>0$ we describe our main algorithm which evaluates if 
$\mu(X)\leq 2+\e$ in $O(n^\omega \log \log_{1+\e} n )$ time.

\section{The computations of $N_k$ and $H_k$}\label{s:compute_NH}

%In this section we show how to compute $N_k$ and $H_k$ in $O(n^\omega \log k)$ time. We begin by a common approach to compute $N_k$. 

Recall that the {\it directed edge matrix} $W$ of $X$ is the $(0,1)$-matrix indexed by the oriented edges of $X$ such that 
$
W_{ef}=1$ 
if and only if the end vertex of $e$ is the start vertex of $f$ and $e$ is not the oriented edge opposite to $f$
for all oriented edges $e,f$ of $X$. 
Observe that 
\begin{gather}\label{trace}
N_k={\rm trace}(W^k)
\qquad 
\hbox{for all integers $k\geq 1$}.
\end{gather}
Note that (\ref{trace}) holds for irregular graphs. 
Let 
$
m=n(q+1)
$
denote the number of oriented edges of $X$.
In other words, this establishes the following formula \cite{Basse:1992,Terras2010,Hashimoto:1989}:
%\vspace{-40pt}
\begin{gather}\label{Bass}
Z(u)^{-1}=\det(I_{m}-Wu).
\end{gather}
To obtain $W^k$, it only uses at most $2 \lfloor \log k\rfloor$ times of $m\times m$ matrix multiplication by applying binary exponentiation. Thus, along the vein (\ref{trace}) the computation of $N_k$ takes $O(m^\omega \log k)$ time.

For the rest of this paper, let  
$\alpha_i$ denote a root of 
$$
u^2-q^{-\frac{1}{2}}\lambda_i u+1 
\qquad \hbox{for all $i=1,2,\ldots,n$}.
$$  
In addition to (\ref{Bass}) the Ihara zeta function $Z(u)$ of $X$ has the following celebrated formula \cite{Basse:1992,Ihara:1966}:
\begin{gather}\label{Ihara}
Z(u)^{-1}=(1-u^2)^{\frac{n(q-1)}{2}}\det(I_n-Au+q I_n u^2 ).
\end{gather}
We change the variable $u$ to $q^{-\frac{1}{2}} u$ and then take logarithm and differential on either side of (\ref{Ihara}). Multiplying the resulting equation by $u$ yields that $\sum_{k=1}^\infty q^{-\frac{k}{2}} N_k u^{k}$ is equal to $\frac{n(q-1)}{2}\sum_{k=1}^\infty (q^{-\frac{k}{2}}+(-q)^{-\frac{k}{2}}) u^{k}$ plus 
\begin{gather}\label{Nk_2ndpart}
\sum_{i=1}^n \sum_{k=1}^\infty (\alpha_i^k+\alpha_i^{-k}) u^k.
\end{gather}

Define a family of polynomials $\{T_k(x)\}_{k=0}^\infty$ by 
\begin{gather}\label{T}
T_{k+1}(x)=x T_k(x)-T_{k-1}(x) 
\qquad 
\hbox{for all $k\geq 1$}
\end{gather}
with $T_0(x)=2$ and $T_1(x)=x$. Note that $\frac{1}{2} T_k(2 x)$ is the $k$th Chebyshev polynomial of the first kind for all integers $k\geq 0$. Using (\ref{T}) a routine induction shows that 
\begin{gather}\label{T(x+xinv)}
T_k(x+x^{-1})=x^k+x^{-k}
\qquad 
\hbox{for all integers $k\geq 0$}.
\end{gather}
It follows from (\ref{T(x+xinv)}) that (\ref{Nk_2ndpart}) is equal to 
\begin{gather*}
\sum_{i=1}^n  \sum_{k=1}^\infty T_k(q^{-\frac{1}{2}} \lambda_i) u^k
=
\sum_{k=1}^\infty \sum_{i=1}^n T_k(q^{-\frac{1}{2}} \lambda_i) u^k
=
\sum_{k=1}^\infty {\rm trace}(T_k(q^{-\frac{1}{2}} A)) u^k.
\end{gather*}
Hence we obtain that 
\begin{gather}\label{Nk&Tk}
N_k=\left\{
\begin{array}{ll}
q^{\frac{k}{2}} {\rm trace}(T_k(q^{-\frac{1}{2}} A))
\qquad 
&\hbox{if $k$ is odd},
\\
n(q-1)+q^{\frac{k}{2}} {\rm trace}(T_k(q^{-\frac{1}{2}} A))
\qquad 
&\hbox{if $k$ is even}.
\end{array}
\right.
\end{gather}
Based on the formula (\ref{Nk&Tk}), we develop a more efficient algorithm to compute $N_k$. The pseudocode is as follows:

\bigskip

\begin{codebox}
\Procname{$\proc{NGC}(A,k)$}

\li $n\gets \hbox{the number of rows of $A$}$

\li $q\gets \hbox{(a row sum of $A$)}-1$

\li $L\gets \proc{Indices}(k)$

\li $l\gets \hbox{the length of $L$}$
\label{NGC-ell}

\li let $T[0\twodots 3]$ denote a new array with $T[0]\gets A$
\label{NGC-T-initial}

\li let $Q[0\twodots 3]$ denote a new array with $Q[0]\gets 1$ \label{NGC-Q-initial}

\li \For $i\gets l-1$ \Downto $1$    \label{NGC-for-begin}
\li \Do
        \For $j\gets 3$ \Downto $1$   \label{NCG-move-begin}
\li     \Do         
           $Q[j]\gets Q[j-1]$
\li        $T[j]\gets T[j-1]$           
        \End                          \label{NCG-move-end}
\li        \If $L[i-1]$ is even       \label{NCG-even-begin}
\li     \Then  
            \If $L[i-1]\isequal 2 L[i]$   \label{NCG-even-j-begin}
\li         \Then 
                $j\gets 1$
\li         \Else 
                $j\gets 2$               
            \End                          \label{NCG-even-j-end}
\li         \If $L[i+j-1]$ is even     \label{NCG-even-L[i+j-1]-begin}
\li         \Then
                $Q[0]\gets Q[j]^2$ 
\li         \Else
                $Q[0]\gets q Q[j]^2$    \label{NCG-even-L[i+j-1]-end}
            \End  
\li         $T[0]\gets T[j]^2-2 Q[0] I_n$  \label{NCG-even-end} 
\li     \Else                                \label{NCG-odd-begin} 
\li         \If $L[i-1]\isequal 2 L[i+1]-1$    \label{NCG-odd-j-begin}
\li         \Then 
                $j\gets 2$
\li         \Else 
                $j\gets 1$                
            \End                               \label{NCG-odd-j-end}
\li         $Q[0]\gets Q[j] Q[j+1]$          \label{NCG-odd-Q[0]}      
\li         $T[0]\gets T[j]T[j+1]-Q[0] A$   \label{NCG-odd-end}
        \End          
    \End    \label{NGC-for-end}

\li \If $L[0]$ is odd  
\label{NCG-return-begin}
\li \Then \Return ${\rm trace}(T[0])$
\li \Else      
        \Return $n(q-1)+{\rm trace}(T[0])$
 \label{NCG-return-end}
    \End  
\end{codebox}

\begin{codebox}
\Procname{$\proc{Indices}(k)$}

\li let $L$ denote a new empty array 
\li $\attrib{L}{append}(k)$  \label{INDICES-k} 
   \>\>\>\> \Comment $\attrib{L}{append}()$ means to add the parameter to the end of $L$
\zi   \>\>\>\>   \Comment the indices of $L$ start with $0$
\li \While $k$ is even   \label{INDICES-while1-begin}
\li  \Do       $k\gets \frac{k}{2}$
\li            $\attrib{L}{append}(k)$  \label{INDICES-while1-append}
    \End   \label{INDICES-while1-end}
   
\li \While $k\neq 1$    \label{INDICES-while2-begin}
\li   \Do        $k\gets \frac{k+1}{2}$
\li                 $\attrib{L}{append}(k)$  \label{INDICES-while2-append-k}
\li                 $\attrib{L}{append}(k-1)$     \label{INDICES-while2-append-k-1}
\li \If  $k$ is even
\li   \Then $k=k-1$
      \End           
    \End  \label{INDICES-while2-end}
\li \Return $L$
\end{codebox}

\medskip

The $\proc{NGC}$ procedure is a bottom-up dynamic programming algorithm. We are now going to prove the correctness of $\proc{NGC}$ and analyze its running time. 

\begin{lem}\label{lem:index}
Let $L$ denote the output array of $\proc{Indices}(k)$. 
For any entry $L[i]>1$ the following hold:
\begin{enumerate}
\item If $L[i]$ is even then $L[i+1]$ or $L[i+2]$ is equal to $\frac{L[i]}{2}$. 

\item If $L[i]$ is odd then $(L[i+1],L[i+2])$ or $(L[i+2],L[i+3])$ is equal to $\left(\frac{L[i]+1}{2},\frac{L[i]-1}{2}\right)$. 
\end{enumerate}
\end{lem}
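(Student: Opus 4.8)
The plan is to argue directly from the structure of $\proc{Indices}(k)$, tracking both the values appended and their positions in $L$. First I would observe that the procedure has two phases. The first \textbf{while} loop repeatedly halves $k$ while it is even, recording the chain $k, k/2, k/4,\ldots, m$ where $m$ is the odd part of $k$; these land in $L[0],L[1],\ldots,L[t]$ with $L[t]=m$ odd. For each even entry $L[i]$ with $i<t$ one has $L[i+1]=L[i]/2$ immediately, which already gives case (i) with the half at distance $1$. The second \textbf{while} loop starts from the odd value $m$ and, on each iteration, takes the current odd value $k$, appends the two consecutive integers $\frac{k+1}{2}$ and $\frac{k-1}{2}$, and then continues with whichever of them is odd.

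Next I would set up a position invariant for the second loop. Since each iteration appends exactly two entries, the pair produced on the $j$th iteration occupies positions $t+2j-1$ and $t+2j$, while the odd value $k_j$ being processed sits at position $t$ (when $j=1$) or at one of the two positions $t+2j-3,\,t+2j-2$ (when $j\ge 2$), according to whether it was the $\frac{k+1}{2}$-entry or the $\frac{k-1}{2}$-entry of the previous pair. With this bookkeeping in hand, every odd entry $L[i]>1$ handled by the second loop is some $k_j$, and its two near-halves $\bigl(\frac{L[i]+1}{2},\frac{L[i]-1}{2}\bigr)$ then appear at distances $(1,2)$ or $(2,3)$ exactly as in case (ii), the two possibilities matching the two possible positions of $k_j$.

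The hard part will be the even entries created in the second loop, since the loop never halves them directly. In each pair $\{\frac{k+1}{2},\frac{k-1}{2}\}$ exactly one member is even; this even value $c$ is the sibling that is \emph{not} carried forward, so its half $c/2$ is produced not by processing $c$ but incidentally when the odd sibling is processed on the following iteration. I would verify by a short computation that $c/2$ is precisely one of the two entries appended on that next iteration, placing it two positions after $c$ and yielding case (i) with the half at distance $2$. The one place this argument needs care is the terminal step: when $k=3$ the loop appends $(2,1)$ and halts, so there is no following iteration; here the even entry $2$ has its half $1$ sitting immediately after it as its own sibling, which is again case (i) but with the half at distance $1$. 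Collecting the three families of entries, namely the even terms of the halving chain, the odd values processed in the second phase (beginning with $m$), and the non-processed even siblings, exhausts all $L[i]>1$ and establishes both parts of the lemma.
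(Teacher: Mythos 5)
Your proposal is correct and follows essentially the same route as the paper's proof: a direct case analysis of $\proc{Indices}(k)$ according to which phase (or which append within the second loop) created the entry $L[i]$, checking where the near-halves land, including the terminal $k_j=3$ case where the pair $(2,1)$ is appended and the loop halts. Your explicit position invariant for the second loop and the ``non-carried even sibling'' computation are just a more formal packaging of the paper's line-by-line cases, and both arguments verify the same distances $1$ or $2$ in part (i) and $(1,2)$ or $(2,3)$ in part (ii).
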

\begin{proof}
(i): Suppose that $L[i]>1$ is even. If each of $L[0\twodots i]$ is even then $L[i+1]=\frac{L[i]}{2}$ by the first \While loop. Suppose that some of $L[0\twodots i]$ is odd. Then $L[i]$ is created by line \ref{INDICES-while2-append-k} or \ref{INDICES-while2-append-k-1}. 
If line \ref{INDICES-while2-append-k} sets $L[i]>2$ or line \ref{INDICES-while2-append-k-1} sets $L[i]$, then $L[i+2]=\frac{L[i]}{2}$ is built in the next iteration. 
If line \ref{INDICES-while2-append-k} sets $L[i]=2$ then line \ref{INDICES-while2-append-k-1} makes $L[i+1]=1$ immediately. Therefore (i) follows.

(ii): Suppose that $L[i]>1$ is odd. If line \ref{INDICES-while1-append} sets $L[i]$, then the first iteration of the second \While loop sets $(L[i+1],L[i+2])=
\left(\frac{L[i]+1}{2},\frac{L[i]-1}{2}\right)$. If line \ref{INDICES-while2-append-k}  sets $L[i]$ then $(L[i+2],L[i+3])=
\left(\frac{L[i]+1}{2},\frac{L[i]-1}{2}\right)$ is built in the next iteration. Similarly, if line \ref{INDICES-while2-append-k-1} sets $L[i]$ then $(L[i+1],L[i+2])=
\left(\frac{L[i]+1}{2},\frac{L[i]-1}{2}\right)$ is built in the next iteration. Therefore (ii) follows.
\end{proof}

In the following lemmas we state two loop invariants.

\begin{lem}\label{lem:loop-invariant-Q}
At the start of each iteration of the \For loop of lines \ref{NGC-for-begin}--\ref{NGC-for-end} of $\proc{NGC}(A,k)$, the entry  
$$
Q[0]=q^{\lfloor L[i]/2 \rfloor}.
$$
\end{lem}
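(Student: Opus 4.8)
The plan is to prove a slightly stronger statement by downward induction on the loop counter $i$ (which runs from $l-1$ to $1$), tracking the whole sliding window $Q[0],Q[1],Q[2],Q[3]$ rather than $Q[0]$ alone. Concretely, I would show that at the start of the iteration with counter $i$ one has $Q[j]=q^{\lfloor L[i+j]/2\rfloor}$ for every $j\in\{0,1,2,3\}$ with $i+j\le l-1$; the asserted formula is then the case $j=0$. A window invariant is forced on us because the update at line~\ref{NCG-even-end} (resp.\ line~\ref{NCG-odd-Q[0]}) reads the shifted entries $Q[j]$ (resp.\ $Q[j],Q[j+1]$) for an offset $j$ chosen in lines~\ref{NCG-even-j-begin}--\ref{NCG-even-j-end} (resp.\ \ref{NCG-odd-j-begin}--\ref{NCG-odd-j-end}), so a single-entry statement cannot be pushed through the recursion. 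Conceptually, $Q[0]$ records exactly the power of $q$ needed to renormalise $T[0]$ so that the doubling and addition formulas for the Chebyshev-type polynomials of~\eqref{T} and~\eqref{T(x+xinv)} can be applied to $q^{-\frac12}A$ without introducing half-integer powers of $q$.

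\emph{Initialization.} I would first note that the output of $\proc{Indices}(k)$ always ends with $L[l-1]=1$ (and in fact $L[l-2]=2$), which is immediate from inspecting how the two \While loops terminate. Since the array enters the \For loop with $Q[0]=1=q^{\lfloor 1/2\rfloor}$, the invariant holds at the first iteration $i=l-1$.

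\emph{Maintenance.} Here I would record the effect of the shift in lines~\ref{NCG-move-begin}--\ref{NCG-move-end}: executed from $j=3$ down to $j=1$ it never clobbers, so the post-shift $Q[j]$ equals the pre-shift $Q[j-1]$, and by the induction hypothesis the post-shift window satisfies $Q[j]=q^{\lfloor L[(i-1)+j]/2\rfloor}$ for $j\ge 1$; that is, it is already the correct window for counter $i-1$ except in position $0$. It remains to check that the value written into $Q[0]$ equals $q^{\lfloor L[i-1]/2\rfloor}$, which I would do by splitting on the parity of $M:=L[i-1]$. When $M$ is even, Lemma~\ref{lem:index}(i) guarantees that the offset $j$ selected by the test $L[i-1]=2L[i]$ satisfies $L[i+j-1]=M/2$, so the post-shift $Q[j]=q^{\lfloor (M/2)/2\rfloor}$; squaring it and multiplying by an extra $q$ exactly when $L[i+j-1]=M/2$ is odd (lines~\ref{NCG-even-L[i+j-1]-begin}--\ref{NCG-even-L[i+j-1]-end}) yields $q^{M/2}=q^{\lfloor M/2\rfloor}$ in both subcases. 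When $M$ is odd, Lemma~\ref{lem:index}(ii) shows that the offset $j$ selected by the test $L[i-1]=2L[i+1]-1$ places the pair $\bigl(\tfrac{M+1}{2},\tfrac{M-1}{2}\bigr)$ at positions $L[(i-1)+j],L[(i-1)+j+1]$, so the product $Q[j]Q[j+1]$ formed at line~\ref{NCG-odd-Q[0]} equals $q^{e}$ with $e=\lfloor\tfrac{M+1}{4}\rfloor+\lfloor\tfrac{M-1}{4}\rfloor$; the elementary identity $\lfloor\tfrac{M+1}{4}\rfloor+\lfloor\tfrac{M-1}{4}\rfloor=\tfrac{M-1}{2}$ for odd $M$ then gives $Q[0]=q^{\lfloor M/2\rfloor}$. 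Since the only $1$ in $L$ is the last entry $L[l-1]$, every processed value $L[i-1]$ (for $i-1\in\{0,\ldots,l-2\}$) exceeds $1$, so Lemma~\ref{lem:index} always applies and both branches close the induction.

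I expect the genuinely delicate points to be bookkeeping rather than mathematics. The first is keeping the two off-by-one relabelings straight — the post-shift $Q[j]$ versus the pre-shift $Q[j-1]$, and the counter $i$ versus $i-1$ — so that the parity test on $L[i+j-1]$ in the even branch lines up with the parity of the half-index $M/2$, and so that the pair located by Lemma~\ref{lem:index}(ii) sits at precisely the positions $Q[j],Q[j+1]$ that line~\ref{NCG-odd-Q[0]} multiplies. The second is verifying that the offset $j$ returned by the code's comparisons is exactly the one for which Lemma~\ref{lem:index} supplies the required half-index, which in turn ensures that the entries actually read are correctly-initialized window entries (populated by earlier iterations' writes to $Q[0]$) and never the stale tail of the array; the boundary case is handled by the observations $L[l-1]=1$ and $L[l-2]=2$, which make the first update use $j=1$ and read only $Q[1]$.
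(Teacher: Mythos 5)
Your proof is correct and follows essentially the same route as the paper's: a downward induction maintaining the loop invariant, initialization via the fact that $1$ occupies only the last entry of $L$, a parity split on $L[i-1]$, and Lemma~\ref{lem:index} to locate the half-indices read through the offset $j$. Your one refinement is to state the sliding-window invariant $Q[j]=q^{\lfloor L[i+j]/2\rfloor}$ explicitly (and to spell out the floor-exponent identities and the $L[l-2]=2$ boundary case), whereas the paper uses the same window bookkeeping implicitly when it applies ``the loop invariant'' to the shifted entries $Q[j]$ and $Q[j+1]$.
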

\begin{proof}
By $\proc{Indices}$ procedure  the value $1$ is only stored in the last entry of $L$. 
Since line \ref{NGC-Q-initial} sets $Q[0]$ to be $1$ initially, it is true prior to the first iteration. 

To see each iteration maintains the loop invariant, let us suppose that $L[i-1]$ is even first. Lines \ref{NCG-move-begin}--\ref{NCG-move-end} move $Q[1],Q[0]$ to $Q[2], Q[1]$ respectively.  By Lemma \ref{lem:index}(i) lines \ref{NCG-even-j-begin}--\ref{NCG-even-j-end} make $L[i+j-1]=\frac{L[i-1]}{2}$. 
Thus 
$$
Q[j]=q^{\lfloor L[i+j-1]/2 \rfloor}
$$
by the loop invariant. 
Combining the above comments, lines \ref{NCG-even-L[i+j-1]-begin}--\ref{NCG-even-L[i+j-1]-end} put $q^{L[i-1]/2}$ in $Q[0]$.

Now suppose that $L[i-1]$ is odd. Lines \ref{NCG-move-begin}--\ref{NCG-move-end} move $Q[2],Q[1],Q[0]$ to $Q[3], Q[2], Q[1]$ respectively. By Lemma \ref{lem:index}(ii) lines \ref{NCG-odd-j-begin}--\ref{NCG-odd-j-end} make 
$(L[i+j-1],L[i+j])=(\frac{L[i-1]+1}{2},\frac{L[i-1]-1}{2})$. Thus  
$$
(Q[j],Q[j+1])=
\left(
q^{\lfloor L[i+j-1]/2 \rfloor},
q^{\lfloor L[i+j]/2 \rfloor}
\right)
$$
by the loop invariant. 
Combining the above comments, line \ref{NCG-odd-Q[0]} places $q^{ \lfloor L[i-1]/2 \rfloor}$ in $Q[0]$. 

Decrementing $i$ for the next iteration, the loop invariant is maintained.  The lemma follows.
\end{proof}

Applying (\ref{T(x+xinv)}) it is routine to verify that 
\begin{gather}\label{Tm+n}
T_{i+j}(x)=T_i(x) T_j(x)-T_{j-i}(x)
\qquad  
\hbox{for all integers $i,j$ with $0\leq i\leq j$}.
\end{gather}

\begin{lem}\label{lem:loop-invariant-T}
At the start of each iteration of the \For loop of lines \ref{NGC-for-begin}--\ref{NGC-for-end} of $\proc{NGC}(A,k)$, the entry
$$
T[0]=q^{\frac{L[i]}{2}} T_{L[i]}(q^{-\frac{1}{2}} A).
$$
\end{lem}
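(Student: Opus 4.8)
The plan is to establish this as a loop invariant by induction on the iterations of the \For loop, in exact parallel with the proof of Lemma~\ref{lem:loop-invariant-Q}. As there, I would first record the slightly stronger fact that the whole length-$4$ window carries the right values: at the start of the iteration with loop variable $i$ one has
$$
T[j]=q^{L[i+j]/2}\,T_{L[i+j]}(q^{-\frac{1}{2}}A)
$$
for every $j\in\{0,1,2,3\}$ with $i+j\le l-1$. Only the $j=0$ case is asserted in the statement, but the window version is what the maintenance step consumes, and it is preserved automatically by the shift in lines~\ref{NCG-move-begin}--\ref{NCG-move-end}, which turns the start-of-iteration value of $T[j-1]$ into the post-shift value of $T[j]$.

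For initialization, at $i=l-1$ the $\proc{Indices}$ procedure guarantees $L[l-1]=1$ (the value $1$ occurs only in the last entry), and line~\ref{NGC-T-initial} sets $T[0]\gets A$. Since $T_1(x)=x$ by (\ref{T}), we have $q^{\frac{1}{2}}T_1(q^{-\frac{1}{2}}A)=A$, so the invariant holds before the first iteration.

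For maintenance, assume the window invariant at the start of iteration $i$; after the shift, $T[j]=q^{L[i+j-1]/2}T_{L[i+j-1]}(q^{-\frac{1}{2}}A)$ for $j\ge 1$, and it remains to check that the branch computing the new $T[0]$ installs $q^{L[i-1]/2}T_{L[i-1]}(q^{-\frac{1}{2}}A)$. The two algebraic identities I need both specialize (\ref{Tm+n}): taking both arguments equal to $m$ gives the doubling rule $T_{2m}(x)=T_m(x)^2-2$, and taking consecutive arguments $b$ and $b+1$ gives $T_{2b+1}(x)=T_{b+1}(x)T_b(x)-T_1(x)=T_{b+1}(x)T_b(x)-x$. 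In the even case Lemma~\ref{lem:index}(i) together with lines~\ref{NCG-even-j-begin}--\ref{NCG-even-j-end} forces $L[i+j-1]=\frac{L[i-1]}{2}=:m$, so that $T[j]^2=q^{m}T_m(q^{-\frac{1}{2}}A)^2$; since Lemma~\ref{lem:loop-invariant-Q} gives $Q[0]=q^{m}$ at line~\ref{NCG-even-end}, the assignment $T[0]\gets T[j]^2-2Q[0]I_n$ produces $q^{m}(T_m(q^{-\frac{1}{2}}A)^2-2I_n)=q^{m}T_{2m}(q^{-\frac{1}{2}}A)$, as wanted. In the odd case Lemma~\ref{lem:index}(ii) with lines~\ref{NCG-odd-j-begin}--\ref{NCG-odd-j-end} forces $(L[i+j-1],L[i+j])=(b+1,b)$ with $b=\frac{L[i-1]-1}{2}$, so $T[j]T[j+1]=q^{(2b+1)/2}T_{b+1}T_b=q^{L[i-1]/2}T_{b+1}T_b$; Lemma~\ref{lem:loop-invariant-Q} gives $Q[0]=Q[j]Q[j+1]=q^{\lfloor (b+1)/2\rfloor+\lfloor b/2\rfloor}$, and since $\lfloor (b+1)/2\rfloor+\lfloor b/2\rfloor=b$ regardless of the parity of $b$, line~\ref{NCG-odd-end} yields $q^{L[i-1]/2}T_{b+1}T_b-q^{b}A=q^{L[i-1]/2}\bigl(T_{b+1}T_b-q^{-\frac{1}{2}}A\bigr)=q^{L[i-1]/2}T_{L[i-1]}(q^{-\frac{1}{2}}A)$. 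Decrementing $i$ then reestablishes the window invariant, completing the induction.

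I expect the genuine difficulty to lie in the index bookkeeping rather than the algebra. One has to keep straight that $T[j]$ tracks $L[i+j]$ before the shift and $L[i+j-1]$ after it, and, more importantly, to invoke Lemma~\ref{lem:index} to certify that the one or two ``child'' indices needed to assemble $L[i-1]$ genuinely lie among $\{L[i],L[i+1],L[i+2]\}$, i.e.\ inside the length-$4$ window, so that the entries $T[j]$ read off in the two branches have already been computed in earlier iterations. The secondary fiddly point is the floor identity $\lfloor (b+1)/2\rfloor+\lfloor b/2\rfloor=b$ in the odd case; it is precisely what makes the leftover power of $q$ in $Q[0]A=q^{b}A$ recombine correctly with $T_1(q^{-\frac{1}{2}}A)=q^{-\frac{1}{2}}A$.
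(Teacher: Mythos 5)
Your proof is correct and follows essentially the same route as the paper's: induction on the loop iterations, using Lemma~\ref{lem:index} to locate the child indices, Lemma~\ref{lem:loop-invariant-Q} for the value of $Q[0]$, and the two specializations of (\ref{Tm+n}) in the even and odd branches. The only difference is cosmetic—you state explicitly the length-$4$ window strengthening of the invariant that the paper uses implicitly when it reads off $T[j]$ and $T[j+1]$ after the shift, which is a fair precision upgrade rather than a new idea.
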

\begin{proof}
Prior to the first iteration of the loop, $T[0]=A$ and $L[i]=1$. Hence the loop invariant holds for the first time.

To see each iteration preserves the loop invariant, we suppose that $L[i-1]$ is even first. By Lemma \ref{lem:index}(i) lines \ref{NCG-even-j-begin}--\ref{NCG-even-j-end} make $L[i+j-1]=\frac{L[i-1]}{2}$. Combined with  the loop invariant, lines \ref{NCG-move-begin}--\ref{NCG-move-end} set $T[j]=q^{\frac{L[i-1]}{4}}T_{\frac{L[i-1]}{2}}(q^{-\frac{1}{2}} A)$. Hence line \ref{NCG-even-end} makes
\begin{align*}
T[0]
&=
q^{\frac{L[i-1]}{2}}T_{\frac{L[i-1]}{2}}(q^{-\frac{1}{2}} A)^2-2 Q[0] I_n
\\
&=
q^{\frac{L[i-1]}{2}}
\left(
T_{\frac{L[i-1]}{2}}(q^{-\frac{1}{2}} A)^2-2I_n  
\right)
\qquad 
\hbox{(by Lemma \ref{lem:loop-invariant-Q})}
\\
&=
q^{\frac{L[i-1]}{2}} T_{L[i-1]}(q^{-\frac{1}{2}} A)
\qquad 
\hbox{(by Equation (\ref{Tm+n}))}.
\end{align*}

Now suppose that $L[i-1]$ is odd. By Lemma \ref{lem:index}(ii), lines \ref{NCG-odd-j-begin}--\ref{NCG-odd-j-end} make $(L[i+j-1],L[i+j])=\left(\frac{L[i-1]+1}{2},\frac{L[i-1]-1}{2}\right)$. 
Combined with  the loop invariant, lines \ref{NCG-move-begin}--\ref{NCG-move-end} set 
$$
(T[j],T[j+1])=
\left(
q^{\frac{L[i-1]+1}{4}}T_{\frac{L[i-1]+1}{2}}(q^{-\frac{1}{2}} A),
q^{\frac{L[i-1]-1}{4}}T_{\frac{L[i-1]-1}{2}}(q^{-\frac{1}{2}} A)
\right).
$$ 
Hence line \ref{NCG-odd-end} makes 
\begin{align*}
T[0]
&=
q^{\frac{L[i-1]}{2}}T_{\frac{L[i-1]+1}{2}}(q^{-\frac{1}{2}} A) T_{\frac{L[i-1]-1}{2}}(q^{-\frac{1}{2}} A)
-Q[0] A
\\
&=
q^{\frac{L[i-1]}{2}}
\left(
T_{\frac{L[i-1]+1}{2}}(q^{-\frac{1}{2}} A) T_{\frac{L[i-1]-1}{2}}(q^{-\frac{1}{2}} A)-q^{-\frac{1}{2}} A
\right)
\qquad 
\hbox{(by Lemma \ref{lem:loop-invariant-Q})}
\\
&=
q^{\frac{L[i-1]}{2}} T_{L[i-1]}(q^{-\frac{1}{2}} A)
\qquad 
\hbox{(by Equation (\ref{Tm+n}))}.
\end{align*}

Decrementing $i$ for the next iteration, the loop invariant is preserved.   The lemma follows.
\end{proof}

We are now ready to prove the correctness of $\proc{NGC}$.

\begin{thm}\label{thm:correct}
$\proc{NGC}(A,k)$ returns the number $N_k$.
\end{thm}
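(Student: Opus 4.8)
The plan is to combine the two loop invariants already established with the closed form (\ref{Nk&Tk}) for $N_k$. The key preliminary observation is that the very first entry appended in $\proc{Indices}$ (line \ref{INDICES-k}) is $k$ itself, so $L[0]=k$, while every later entry is strictly smaller, reflecting the halving recursion that drives the bottom-up computation. Thus the entire purpose of the \For loop is to propagate the invariant of Lemma \ref{lem:loop-invariant-T} from the terminal index, where $L[l-1]=1$ and $T[0]=A$, back down to index $0$.

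First I would promote the ``start of each iteration'' invariant of Lemma \ref{lem:loop-invariant-T} to a post-loop statement. The loop variable $i$ runs from $l-1$ down to $1$; by the standard termination step of a loop-invariant argument, immediately after the final iteration (the one executed with $i=1$, which processes $L[i-1]=L[0]$) the array satisfies the invariant evaluated at the would-be value $i=0$, namely $T[0]=q^{\frac{L[0]}{2}}\,T_{L[0]}(q^{-\frac{1}{2}}A)$. Since $L[0]=k$, this reads $T[0]=q^{\frac{k}{2}}\,T_k(q^{-\frac{1}{2}}A)$. Taking the trace and using its linearity then gives ${\rm trace}(T[0])=q^{\frac{k}{2}}\,{\rm trace}(T_k(q^{-\frac{1}{2}}A))$.

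Next I would split on the parity of $k=L[0]$, matching the final conditional in lines \ref{NCG-return-begin}--\ref{NCG-return-end}. If $k$ is odd, the procedure returns ${\rm trace}(T[0])=q^{\frac{k}{2}}{\rm trace}(T_k(q^{-\frac{1}{2}}A))$, which equals $N_k$ by the odd case of (\ref{Nk&Tk}). If $k$ is even, it returns $n(q-1)+{\rm trace}(T[0])=n(q-1)+q^{\frac{k}{2}}{\rm trace}(T_k(q^{-\frac{1}{2}}A))$, which equals $N_k$ by the even case of (\ref{Nk&Tk}). In either case the output is exactly $N_k$, which finishes the argument.

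As for difficulty, there is essentially no deep obstacle remaining, since Lemmas \ref{lem:index}, \ref{lem:loop-invariant-Q}, and \ref{lem:loop-invariant-T} have already absorbed the substantive work of verifying that the recursions of (\ref{Tm+n}) are applied at the correct indices. The only points demanding care are bookkeeping: confirming $L[0]=k$ from $\proc{Indices}$, and correctly upgrading the per-iteration invariant of Lemma \ref{lem:loop-invariant-T} into a statement about $T[0]$ after the loop has exited. The correctness of the final parity branch is then an immediate consequence of formula (\ref{Nk&Tk}), itself derived from the Ihara determinant identity (\ref{Ihara}) together with the Chebyshev recursion (\ref{T}).
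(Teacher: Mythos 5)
Your proposal is correct and follows essentially the same route as the paper's proof: both invoke the termination case ($i=0$) of the loop invariant in Lemma \ref{lem:loop-invariant-T}, use the fact that $\proc{Indices}$ stores $k$ in $L[0]$ to conclude $T[0]=q^{\frac{k}{2}}T_k(q^{-\frac{1}{2}}A)$ after the loop, and then match the parity branch of lines \ref{NCG-return-begin}--\ref{NCG-return-end} against formula (\ref{Nk&Tk}). Your write-up merely spells out the trace computation and the two parity cases slightly more explicitly than the paper does.
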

\begin{proof}
At termination of the \For loop of lines \ref{NGC-for-begin}--\ref{NGC-for-end}, the value $i=0$. Since the input $k$ is stored in $L[0]$, it follows from  Lemma \ref{lem:loop-invariant-T} that
$$
T[0]=q^{\frac{k}{2}} T_k(q^{-\frac{1}{2}} A)
$$
after the \For loop of lines \ref{NGC-for-begin}--\ref{NGC-for-end}. 
Therefore lines \ref{NCG-return-begin}--\ref{NCG-return-end} return the number $N_k$ by (\ref{Nk&Tk}). The correctness follows.
\end{proof}

\begin{thm}\label{thm:time1}
$\proc{NGC}(A,k)$ runs in $O(n^\omega \log k)$ time.
\end{thm}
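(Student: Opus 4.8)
The plan is to bound separately the cost of the $\proc{Indices}$ subroutine, the initialization, and the main \For loop, and then combine them. First I would analyze $\proc{Indices}(k)$ to show that its output array $L$ has length $l=O(\log k)$ and that the procedure itself runs in $O(\log k)$ time. The first \While loop of lines \ref{INDICES-while1-begin}--\ref{INDICES-while1-end} halves $k$ on each pass and appends one entry, so it executes at most $\log k$ times. After it terminates, $k$ is odd, and the second \While loop of lines \ref{INDICES-while2-begin}--\ref{INDICES-while2-end} replaces $k$ by $\frac{k+1}{2}$ on each pass, a value that is strictly smaller than $k$ whenever $k>1$ and is at most about $\frac{k}{2}$; hence this loop also runs $O(\log k)$ times. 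Since each pass appends at most two entries and line \ref{INDICES-k} appends one entry beforehand, we get $l=O(\log k)$. As every line of $\proc{Indices}$ performs only $O(1)$ scalar work, the total time spent there is $O(\log k)$.

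Next I would account for the per-iteration cost of the \For loop of lines \ref{NGC-for-begin}--\ref{NGC-for-end}. The key observation is that, whichever branch is taken, each iteration performs exactly one $n\times n$ matrix multiplication: either $T[j]^2$ on line \ref{NCG-even-end} in the even case, or $T[j]T[j+1]$ on line \ref{NCG-odd-end} in the odd case. All remaining work inside the loop, namely the constant number of matrix copies on lines \ref{NCG-move-begin}--\ref{NCG-move-end}, the scalar comparisons and products that choose $j$ and set $Q[0]$ on lines \ref{NCG-even-j-begin}--\ref{NCG-odd-Q[0]}, and the matrix subtraction forming $T[0]$, consists of $O(1)$ matrix additions and copies (each $O(n^2)$) together with $O(1)$ scalar operations. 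Because $\omega\geq 2$, we have $n^2=O(n^\omega)$, so the dominant cost of each iteration is the single matrix multiplication, i.e. $O(n^\omega)$.

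Finally I would assemble the estimate. The \For loop executes $l-1=O(\log k)$ times, contributing $O(n^\omega\log k)$. The initialization on lines \ref{NGC-T-initial}--\ref{NGC-Q-initial} copies $A$ once in $O(n^2)$ time, reading off $n$ and $q$ costs $O(n^2)$, the call to $\proc{Indices}$ costs $O(\log k)$, and the trace computation on lines \ref{NCG-return-begin}--\ref{NCG-return-end} costs $O(n)$; all of these are absorbed into $O(n^\omega\log k)$. Summing the contributions shows that $\proc{NGC}(A,k)$ runs in $O(n^\omega\log k)$ time.

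I expect the main obstacle to be the length bound $l=O(\log k)$, since $l$ is precisely the factor that controls the number of matrix multiplications. The delicate point is confirming that the transition $k\mapsto\frac{k+1}{2}$ in the second \While loop decreases $k$ geometrically even after the adjustment ``if $k$ is even then $k=k-1$'', so that $O(\log k)$ passes suffice and the two appends per pass still leave $l=O(\log k)$.
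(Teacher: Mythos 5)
Your proposal is correct and follows essentially the same route as the paper: bound the length $l$ of the array produced by $\proc{Indices}(k)$ by $O(\log k)$, note that the \For loop performs exactly one $n\times n$ matrix multiplication per iteration ($l-1$ in total) with all other work of lower order, and conclude $O(n^\omega\log k)$. The only difference is cosmetic: the paper counts $l$ exactly as $2\lfloor\log k\rfloor-h+1$ via the binary representation of $k$ (with $h$ the position of the lowest set bit), whereas you obtain $l=O(\log k)$ from the geometric decrease of $k$ under $k\mapsto\frac{k+1}{2}$, which suffices for the stated bound.
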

\begin{proof}
Let 
$
b_{\lfloor \log k\rfloor} b_{\lfloor \log k\rfloor-1} \cdots b_0
$
denote the binary representation of $k$. Let $h$ denote the rightmost index with $b_h=1$. Observe that 
lines \ref{INDICES-k}--\ref{INDICES-while1-end} of $\proc{Indices}(k)$ increase the length of $L$ by $h+1$ and lines \ref{INDICES-while2-begin}--\ref{INDICES-while2-end} of $\proc{Indices}(k)$ increase the length of $L$ by double of $\lfloor \log k\rfloor-h$. Hence line \ref{NGC-ell} of  $\proc{NGC}(A,k)$ sets
$$
l=2\lfloor \log k\rfloor-h+1.
$$ 
Since $n\times n$ matrix multiplication appears in $\proc{NGC}(A,k)$ exactly $l-1$ times, the time complexity  is $O(n^\omega \log k)$.
\end{proof}

Recall the number sequence $\{H_k\}_{k=1}^\infty$ from (\ref{Hk}). 
To compute $H_k$, one may modify lines~\ref{NCG-return-begin}--\ref{NCG-return-end} of $\proc{NCG}(A,k)$ to read as follows:

\begin{codebox}
\setcounter{codelinenumber}{24}
\li \If $L[0]$ is odd 
\li \Then $Q[0]\gets q^{\frac{1}{2}} Q[0]$
    \End
\li \Return $2(n-1)+Q[0]+Q[0]^{-1}-Q[0]^{-1} {\rm trace}(T[0])$
\end{codebox}

\noindent Let $\proc{Hsequence}(A,k)$ denote the resulting procedure in the rest of this paper.

\begin{thm}\label{thm:time2}
$\proc{Hsequence}(A,k)$ runs in $O(n^\omega \log k)$ time.
\end{thm}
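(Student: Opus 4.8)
The plan is to treat Theorem \ref{thm:time2} as an immediate corollary of Theorem \ref{thm:time1}. The key observation is that $\proc{Hsequence}(A,k)$ is obtained from $\proc{NGC}(A,k)$ by replacing \emph{only} the return block (the former lines \ref{NCG-return-begin}--\ref{NCG-return-end}) with the three modified lines, while the entire \For loop of lines \ref{NGC-for-begin}--\ref{NGC-for-end} and all the preceding initialization are copied verbatim. Since Theorem \ref{thm:time1} already gives $O(n^\omega \log k)$ for $\proc{NGC}(A,k)$, it suffices to verify that the substituted return block contributes nothing beyond this bound.

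First I would isolate where the expensive arithmetic lives. In the proof of Theorem \ref{thm:time1} the $n\times n$ matrix multiplications occur \emph{exclusively} inside the \For loop, and they number exactly $l-1$, where $l=2\lfloor\log k\rfloor-h+1$. Because $\proc{Hsequence}$ leaves this loop untouched, its loop contributes the same $O(n^\omega \log k)$ cost. I would then inspect the modified return block and note that it performs no matrix multiplication whatsoever: it forms the scalar $q^{\frac{1}{2}}Q[0]$ when $L[0]$ is odd, computes the scalar inverse $Q[0]^{-1}$, evaluates ${\rm trace}(T[0])$, and combines these via $O(1)$ scalar additions and multiplications in the expression $2(n-1)+Q[0]+Q[0]^{-1}-Q[0]^{-1}\,{\rm trace}(T[0])$.

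The only step requiring a moment's attention is the cost of ${\rm trace}(T[0])$, and this is clearly $O(n)$, since it merely sums the $n$ diagonal entries of the matrix $T[0]$ already computed by the loop. As $O(n)$ is dominated by $O(n^\omega)$ and the remaining scalar arithmetic is $O(1)$, the entire modified return block runs in $O(n^\omega)$ time, which is absorbed by the loop's $O(n^\omega \log k)$ bound. Combining these observations yields the total running time $O(n^\omega \log k)$ for $\proc{Hsequence}(A,k)$. I do not anticipate any genuine obstacle here; the whole content of the argument is the bookkeeping confirmation that the altered lines introduce only scalar operations and one $O(n)$ trace evaluation, rather than any new matrix product.
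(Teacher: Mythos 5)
Your proposal is correct and matches the paper's approach exactly: the paper's proof of Theorem \ref{thm:time2} is simply ``Immediate from Theorem \ref{thm:time1},'' and your argument spells out the same reasoning. The additional bookkeeping you provide---that the modified return block performs only scalar arithmetic plus one $O(n)$ trace evaluation, hence is dominated by the loop's $O(n^\omega \log k)$ cost---is a sound and complete justification of what the paper leaves implicit.
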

\begin{proof}
Immediate from Theorem \ref{thm:time1}.
\end{proof}

\section{A connection between $\mu(X)$ and $\{H_k\}_{k=1}^\infty$}\label{s:proof}

\begin{lem}\label{lem:Hk_2}
For any integer $k\geq 1$ the following equation holds:
\begin{gather*}
H_k=2(n-1)-\sum_{i=2}^n T_k(q^{-\frac{1}{2}} \lambda_i).
\end{gather*}
\end{lem}
\begin{proof}
This lemms follows by substituting (\ref{Nk&Tk}) into (\ref{Hk}).
\end{proof}

\begin{lem}\label{lem1}
If $s$ is a real number with $|s|\leq 2$ then $|T_k(s)|\leq 2$ for all integers $k\geq 0$.
\end{lem}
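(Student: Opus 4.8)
The plan is to exploit the Chebyshev nature of $T_k$ recorded in (\ref{T(x+xinv)}) by means of a trigonometric substitution. Since $|s|\le 2$, the ratio $s/2$ lies in $[-1,1]$, so I can choose a real angle $\theta\in[0,\pi]$ with $s=2\cos\theta$. Equivalently, setting $x=e^{i\theta}$ produces a nonzero complex number with $x+x^{-1}=2\cos\theta=s$.

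The key step is to feed this $x$ into the identity (\ref{T(x+xinv)}). That identity was obtained by induction from the recurrence (\ref{T}), so it is an identity of rational expressions in $x$ valid for every nonzero complex number, in particular for $x=e^{i\theta}$. Substituting gives
$$
T_k(s)=T_k(x+x^{-1})=x^k+x^{-k}=e^{ik\theta}+e^{-ik\theta}=2\cos(k\theta).
$$
Taking absolute values yields $|T_k(s)|=2\,|\cos(k\theta)|\le 2$ for every integer $k\ge 0$, which is exactly the claim.

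I do not anticipate a genuine obstacle here; the only point deserving a word of care is the legitimacy of evaluating (\ref{T(x+xinv)}) at a complex argument, which is justified because (\ref{T(x+xinv)}) holds for all $x\neq 0$. If one prefers to avoid complex numbers entirely, the alternative is a direct induction on $k$. Note, however, that the crude estimate $|T_{k+1}(s)|\le |s|\,|T_k(s)|+|T_{k-1}(s)|\le 2\cdot 2+2=6$ coming straight from (\ref{T}) is too weak, so one must instead carry the sharper hypothesis $T_k(s)=2\cos(k\theta)$ through the recurrence using the product-to-sum formula $2\cos\theta\cos(k\theta)=\cos((k+1)\theta)+\cos((k-1)\theta)$. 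This recovers the same conclusion, confirming that the trigonometric identity is what really drives the bound.
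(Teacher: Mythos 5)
Your proof is correct and follows essentially the same route as the paper: both arguments reduce the claim to the identity $T_k(2\cos\theta)=2\cos k\theta$, which the paper cites as a well-known property of Chebyshev polynomials and which you instead derive from (\ref{T(x+xinv)}) by the legitimate substitution $x=e^{i\theta}$ (valid since that identity holds as a Laurent-polynomial identity for all nonzero complex $x$). Your closing remark correctly notes that a naive induction on the recurrence (\ref{T}) alone is too lossy, but this is a side comment and the main argument stands as is.
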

\begin{proof}
Recall that $\frac{1}{2}T_k(2x)$ is the $k$th Chebyshev polynomial of the first kind for all integers $k\geq 0$. Hence
$$
T_k(2\cos \theta)=2 \cos k\theta
\qquad 
\hbox{for all real numbers $\theta$}.
$$
The lemma follows from the above equality.
\end{proof}

\begin{lem}
[Proposition 4.7, \cite{Huang:Li_criterion}]
\label{lem2}
Let $S$ denote a nonempty finite multiset consisting of real numbers. 
If  there is an even integer $k\geq 2$ with
\begin{gather}\label{prop1:assumption}
\frac{1}{2|S|}\sum_{s\in S} T_k(s)\leq 1
\end{gather}
then $
|s|\leq 
1+\sqrt[k]{4|S|-3}
$
for all $s\in S$.
\end{lem}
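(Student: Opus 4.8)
The plan is to argue by contradiction, reducing the multiset statement to a single pointwise inequality for the polynomial $T_k$. Suppose some $s_0\in S$ satisfies $|s_0|>1+\sqrt[k]{4|S|-3}$. As $|S|\geq 1$ gives $4|S|-3\geq 1$, this already forces $|s_0|>2$. Since $k$ is even, $T_k$ is an even polynomial (immediate from the recurrence (\ref{T}) or from (\ref{T(x+xinv)})), so $T_k(s_0)=T_k(|s_0|)$ and I may work with $x:=|s_0|>2$. The two ingredients I would isolate are a lower bound on the single large term $T_k(s_0)$ and a uniform lower bound on all the remaining terms, namely
\begin{gather}\label{keyineq}
T_k(x)\geq (x-1)^k+1 \qquad \hbox{for all real $x\geq 2$},
\end{gather}
together with the fact that $T_k(s)\geq -2$ for every real $s$.

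Granting (\ref{keyineq}), the argument closes quickly. The uniform bound $T_k(s)\geq -2$ holds because for $|s|\leq 2$ it is Lemma \ref{lem1}, while for $|s|>2$ we have $T_k(s)=T_k(|s|)=y^k+y^{-k}>2$, where $y>1$ is determined by $|s|=y+y^{-1}$ via (\ref{T(x+xinv)}). Applying (\ref{keyineq}) to $s_0$ and the hypothesis on $|s_0|$,
$$
T_k(s_0)=T_k(x)\geq (x-1)^k+1>(4|S|-3)+1=4|S|-2,
$$
whence
$$
\sum_{s\in S}T_k(s)=T_k(s_0)+\sum_{s\neq s_0}T_k(s)>(4|S|-2)-2(|S|-1)=2|S|.
$$
This contradicts (\ref{prop1:assumption}), so no such $s_0$ exists and the lemma follows.

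The main obstacle is therefore the scalar inequality (\ref{keyineq}), and this is where I expect the real work to lie. Writing $x=y+y^{-1}$ with $y\geq 1$, identity (\ref{T(x+xinv)}) recasts (\ref{keyineq}) as $y^k+y^{-k}-1\geq (y+y^{-1}-1)^k$. I would prove it by comparing $y^k$ with $(x-1)^k=(y-(1-y^{-1}))^k$: since $t\mapsto t^k$ is convex and increasing on $[0,\infty)$, the mean value theorem yields $y^k-(x-1)^k\geq k(x-1)^{k-1}(1-y^{-1})$, so it suffices to check $k(x-1)^{k-1}(1-y^{-1})\geq 1-y^{-k}$. Multiplying through by $y^{k-1}$ and using $(x-1)y=y^2-y+1$, this reduces to the elementary inequality
$$
k\,(y^2-y+1)^{k-1}\geq 1+y+\cdots+y^{k-1},
$$
which holds for $y\geq 1$ because $y^2-y+1=(y-1)^2+y\geq y$, so each of the $k$ summands on the right is at most $y^{k-1}\leq (y^2-y+1)^{k-1}$. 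Every estimate degenerates to equality at $x=2$ (that is, $y=1$), consistent with $T_k(2)=2$, which also indicates that the constant $4|S|-3$ is the best this method produces.
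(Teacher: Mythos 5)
Your proof is correct, and there is in fact nothing internal to compare it against: the paper imports Lemma \ref{lem2} verbatim as Proposition 4.7 of \cite{Huang:Li_criterion} without reproducing an argument, so you have supplied a self-contained proof of a result the paper only cites. I checked each step. The contradiction scheme is sound: for even $k$ the parity of the recurrence (\ref{T}) gives $T_k(-x)=T_k(x)$; the uniform bound $T_k(s)\geq -2$ holds on all of $\mathbb{R}$ (Lemma \ref{lem1} for $|s|\leq 2$, and $T_k(s)=y^k+y^{-k}>2$ with $|s|=y+y^{-1}$, $y>1$, otherwise); and the count $(4|S|-2)-2(|S|-1)=2|S|$ does contradict the hypothesis that the average in (\ref{prop1:assumption}) is at most $1$. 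The crux, your inequality $T_k(x)\geq (x-1)^k+1$ for $x\geq 2$, also checks out: writing $x=y+y^{-1}$ with $y\geq 1$, one has $y-(x-1)=1-y^{-1}\geq 0$, the tangent-line (MVT) bound $y^k-(x-1)^k\geq k(x-1)^{k-1}(1-y^{-1})$ is valid since $x-1\geq 1$, and after factoring $1-y^{-k}=(1-y^{-1})\left(1+y^{-1}+\cdots+y^{-(k-1)}\right)$ and clearing $y^{k-1}$, the claim reduces to your inequality $k(y^2-y+1)^{k-1}\geq 1+y+\cdots+y^{k-1}$, which indeed follows term by term from $y^2-y+1=(y-1)^2+y\geq y\geq 1$. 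One feature of your argument deserves emphasis: the additive $+1$ in the key inequality is not cosmetic. With only the weaker bound $T_k(x)\geq(x-1)^k$, the same bookkeeping yields $\sum_{s\in S}T_k(s)>2|S|-1$, i.e.\ an average exceeding $1-\frac{1}{2|S|}$, which does not contradict (\ref{prop1:assumption}); your sharper pointwise bound, with equality exactly at $x=2$ consistent with $T_k(2)=2$, is precisely what closes the argument, and your remark that $4|S|-3$ is the natural constant for this method is accurate.
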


We are ready to prove the following result:

\begin{thm}\label{thm:Hsequence}
\begin{enumerate}
\item If there exists an even integer $k\geq 2$ with $H_k\geq 0$ then 
$$
\mu(X)\leq 1+\sqrt[k]{4n-7}.
$$

\item $H_k\geq 0$ for all integers $k\geq 1$ if and only if 
$$
\mu(X)\leq 2.
$$ 

\item If $\mu(X)>2$ then 
$$
\mu(X)=\lim_{k\to \infty}\sqrt{\frac{H_{2k+2}}{H_{2k}}}+\sqrt{\frac{H_{2k}}{H_{2k+2}}}.
$$
\end{enumerate}
\end{thm}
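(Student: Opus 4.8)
The plan is to reduce all three parts to a single closed form for $H_k$ in terms of the nontrivial eigenvalues, obtained from the Ihara determinant formula. Writing $m=\frac{(q+1)n}{2}$ for the number of edges (so $m-n=\frac{(q-1)n}{2}$), Ihara's theorem gives $Z(u)^{-1}=(1-u^2)^{m-n}\det(I-Au+qu^2I)$, whence
$$
\sum_{k=1}^\infty \frac{N_k}{k}u^k=\log Z(u)=-\tfrac{(q-1)n}{2}\log(1-u^2)-\sum_{i=1}^n\log(1-\lambda_i u+qu^2).
$$
Factoring $1-\lambda_i u+qu^2=(1-\alpha_i u)(1-\beta_i u)$ with $\alpha_i\beta_i=q$ and $\alpha_i+\beta_i=\lambda_i$, and expanding each logarithm, I would read off $N_k=\sum_{i=1}^n(\alpha_i^k+\beta_i^k)$ for odd $k$ and $N_k=\sum_{i=1}^n(\alpha_i^k+\beta_i^k)+n(q-1)$ for even $k$; the even correction is exactly the term removed in \eqref{Hk}. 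Since $\{\alpha_1,\beta_1\}=\{q,1\}$, the quantity $q^{-k/2}(\alpha_1^k+\beta_1^k)=q^{k/2}+q^{-k/2}$ cancels the like terms in $H_k$. Introducing $\rho_i$ with $|\rho_i|\ge 1$ and $\rho_i+\rho_i^{-1}=q^{-1/2}\lambda_i$, everything collapses to the master identity
$$
H_k=2(n-1)-\sum_{i=2}^n\bigl(\rho_i^k+\rho_i^{-k}\bigr),
$$
in which each summand lies in $[-2,2]$ precisely when $|\lambda_i|\le 2\sqrt q$, and equals $\pm$ a growing geometric term exactly when $|\lambda_i|>2\sqrt q$.

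Parts (ii) and (iii) then follow from this identity by a sign-and-growth analysis in which restricting to even exponents is the crucial device. If $\mu(X)\le 2$ every $\rho_i$ lies on the unit circle, so $\rho_i^k+\rho_i^{-k}=2\cos(k\theta_i)\le 2$ and $H_k\ge 0$ for all $k$; conversely, if $\mu(X)>2$ some $\rho_{i_0}$ is real with $r:=|\rho_{i_0}|>1$, and for even $k$ the subdominant complex terms stay $\ge -2$ while $r^k+r^{-k}\to\infty$, forcing $H_k<0$ for large even $k$. For (iii), writing $r=\max_{2\le i\le n}|\rho_i|>1$ (so $\mu(X)=r+r^{-1}$) and letting $d$ be the number of indices with $|\rho_i|=r$, the even-index master identity gives $H_{2k}=-d\,r^{2k}\bigl(1+o(1)\bigr)$, whence $H_{2k+2}/H_{2k}\to r^2$ and the stated limit equals $r+r^{-1}=\mu(X)$.

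Part (i) is where the numerical constant must be earned. Assuming some even $k$ has $H_k\ge 0$, I may assume $\mu(X)>2$ (otherwise the bound is immediate, as $4n-7\ge 1$). With $r=|\rho_{i_0}|>1$ realizing $\mu(X)=r+r^{-1}$, evenness makes every other summand at least $-2$, so the master identity together with $H_k\ge 0$ yields $r^k+r^{-k}\le 2(n-1)+2(n-2)=4n-6$. It then remains to convert this into the sharper bound $(\mu(X)-1)^k\le 4n-7$, which is exactly the content of the elementary inequality
$$
(x-1)^k\le V_k(x)-1\qquad(x\ge 2),
$$
where $V_k(x)=r^k+r^{-k}$ for $x=r+r^{-1}$ is the Chebyshev-type polynomial with $V_{k+1}=xV_k-V_{k-1}$, $V_0=2$, $V_1=x$.

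The main obstacle is this last inequality, which I would prove by induction on $k$ after isolating the right invariant. Setting $D_k=\bigl(V_k(x)-1\bigr)-(x-1)^k$, the recurrences $V_{k+1}=xV_k-V_{k-1}$ and $(x-1)^{k+1}=x(x-1)^k-(x-1)(x-1)^{k-1}$ combine to give $D_{k+1}=xD_k-D_{k-1}+(x-2)\bigl((x-1)^{k-1}+1\bigr)$, whose last term is nonnegative for $x\ge 2$. Since $D_0=D_1=0$ and $D_2=2x-4\ge 0$, a short induction using $x\ge 2$ shows simultaneously that $D_k\ge D_{k-1}\ge 0$, hence $D_k\ge 0$ for all $k$; feeding $r^k+r^{-k}\le 4n-6$ into $(\mu(X)-1)^k\le V_k(\mu(X))-1=r^k+r^{-k}-1$ then delivers $\mu(X)\le 1+\sqrt[k]{4n-7}$. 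The only remaining points demanding care are the bookkeeping of the even correction term in passing from $N_k$ to $H_k$, and making sure the subdominant-eigenvalue estimates invoke evenness of $k$ in the right places.
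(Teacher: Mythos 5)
Your proposal is correct, and for the master identity and for parts (ii)--(iii) it follows essentially the paper's route: your formula $H_k=2(n-1)-\sum_{i=2}^n(\rho_i^k+\rho_i^{-k})$ is the paper's Lemma \ref{lem:Hk_2} in disguise (via $T_k(\rho+\rho^{-1})=\rho^k+\rho^{-k}$), and your asymptotic $H_{2k}=-d\,r^{2k}\bigl(1+o(1)\bigr)$ is exactly the paper's proof of (iii) with $(\beta,m)$ renamed $(r,d)$. The genuine divergence is in part (i): the paper disposes of it in two lines by quoting Proposition 4.7 of \cite{Huang:Li_criterion} (Lemma \ref{lem2}), which bounds \emph{every} element of the multiset, whereas you prove the needed estimate from scratch, isolating the elementary inequality $(x-1)^k\le T_k(x)-1$ for $x\ge 2$ and establishing it by induction from $D_{k+1}=xD_k-D_{k-1}+(x-2)\bigl((x-1)^{k-1}+1\bigr)$ with $D_0=D_1=0$ and $D_2=2x-4$. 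I verified this algebra, the monotone induction $D_k\ge D_{k-1}\ge 0$, and your use of evenness of $k$ (so that subdominant terms satisfy $T_k(s)\ge -2$, negative real $\rho_i$ included, and the dominant term is $r^k+r^{-k}$ even when $\lambda_{i_0}<0$), which correctly yields $r^k+r^{-k}\le 4n-6$ and hence $(\mu(X)-1)^k\le 4n-7$; the reduction to the case $\mu(X)>2$ via $4n-7\ge 1$ is also sound. A second, minor deviation: you prove the forward direction of (ii) by the direct contrapositive (if $\mu(X)>2$ then $H_{2k}\to-\infty$ along even $k$), where the paper instead lets $k\to\infty$ in the bound of (i); both work. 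What each approach buys: the paper's citation is shorter and inherits the stronger per-element conclusion of the quoted lemma, while your argument makes the theorem self-contained and exposes the elementary Chebyshev-type inequality hiding inside the cited result, at the cost of proving it only for the maximal element---which is all that $\mu(X)$ requires.
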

\begin{proof}
(i): Suppose that there exists an even integer $k\geq 2$ with $H_k\geq 0$. By Lemma \ref{lem:Hk_2}, it is equivalent to 
$$
\frac{1}{2(n-1)}\sum_{i=2}^n T_k(q^{-\frac{1}{2}}\lambda_i)\leq 1.
$$
Applying Lemma \ref{lem2} it follows that $q^{-\frac{1}{2}}|\lambda_i|\leq 1+\sqrt[k]{4n-7}$ for all $i=2,3,\ldots,n$. Therefore (i) follows.

(ii): ($\Rightarrow$) Since $H_k\geq 0$ for all even integers $k\geq 2$, it follows from Theorem \ref{thm:Hsequence}(i) that 
$$
\mu(X)\leq \lim_{k\to\infty} 1+\sqrt[k]{4n-7}=2.
$$ 

($\Leftarrow$) Since $\mu(X)\leq 2$ it follows that $q^{-\frac{1}{2}}|\lambda_i|\leq 2$ for all $i=2,3,\ldots,n$. Combined with Lemma \ref{lem1} this yields that 
$|T_k(q^{-\frac{1}{2}}\lambda_i)|\leq 2$
for all $i=2,3,\ldots,n$ and all integers $k\geq 0$.
Hence $H_k\geq 0$ for all $k\geq 1$ by Lemma \ref{lem:Hk_2}.

(iii): 
Recall that $\alpha_i$ is a root of $u^2-q^{-\frac{1}{2}} \lambda_i u+1$ for all $i=1,2,\ldots,n$. Observe that if $q^{-\frac{1}{2}} |\lambda_i|\leq 2$ then $|\alpha_i|=1$.
Since $\mu(X)>2$ there is an $i\in\{2,3,\ldots,n\}$ with $q^{-\frac{1}{2}}|\lambda_i|>2$ and hence $\alpha_i$ is a real number with $\alpha_i\not= \pm 1$. Therefore 
$$
\beta=\max_{2\leq i\leq n}\{|\alpha_i|,|\alpha_i|^{-1}\}>1.
$$
Since $f(x)=x+x^{-1}$ is strictly increasing on $(1,\infty)$, we have 
$$
\beta+\beta^{-1}=\max_{2\leq i\leq n} |\alpha_i|+|\alpha_i|^{-1}=\mu (X).
$$
By (\ref{T(x+xinv)}) we have 
$$
\lim_{k\to \infty}
\frac{T_{2k}(\alpha_i+\alpha_i^{-1})}{\beta^{2k}}
=
\lim_{k\to \infty}
\frac{\alpha_i^{2k}+\alpha_i^{-2k}}{\beta^{2k}}
=\left\{
\begin{array}{ll}
1
\qquad 
&\hbox{if $\beta=\max\{|\alpha_i|,|\alpha_i|^{-1}\}$},
\\
0
\qquad 
&\hbox{else}
\end{array}
\right.
$$
for all $i=2,3,\ldots,n$. Combined with Lemma \ref{lem:Hk_2} this yields that $H_{2k}$ is asymptotic to $-m \beta^{2k}$ as $k$ approaches to $\infty$ where $m$ is the number of $i\in\{2,3,\ldots,n\}$ with $\beta=\max\{|\alpha_i|,|\alpha_i|^{-1}\}$. By the above comments the statement (iii) follows.
\end{proof}

\section{A fast algorithm to evaluate $\mu(X)$}\label{s:algorithm}

With the inputs $A$ and a real number $\varepsilon>0$, Theorem \ref{thm:Hsequence} yields the following  algorithm that return $\const{true}$ if $\mu(X)$ or its estimation is less than or equal to $2+\varepsilon$ and return $\const{false}$ else:

\begin{codebox}
\Procname{$\proc{SpectralExpansion}(A,\varepsilon)$}

\li $n\gets \hbox{the number of rows of $A$}$

\li $k\gets 2 \lceil \log (4n-7)/2 \log (1+\varepsilon)\rceil$ \label{SE:k}

\li $k'\gets k+2$ \label{SE:k'}
    
\li $H\gets \proc{Hsequence}(A,k)$ \label{SE:H}

\li $H'\gets \proc{Hsequence}(A,k')$ \label{SE:H'}

\li \If $H \geq 0$ or $H'\geq 0$
\li \Then % print ``$\mu(X)\leq $''$2+\varepsilon$
          \Return \const{true}
\li \Else  
       $\lambda\gets  \sqrt{H'/H}+\sqrt{H/H'}$  
\li    print $\lambda$ `` is an estimation of $\mu(X)$.''           
\li    \If $\lambda\leq 2+\varepsilon$
\li     \Then  \Return \const{true}
\li    \Else  
              \Return \const{false}
        \End
    \End

\end{codebox}

\begin{thm}
$\proc{SpectralExpansion}(A,\e)$ runs in $O(n^\omega \log\log_{1+\e} n)$ time.
\end{thm}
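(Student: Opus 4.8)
The plan is to bound the running time by accounting separately for the single arithmetic cost of computing the index parameter $k$ and the dominant cost of the two calls to $\proc{Hsequence}$. First I would observe that line~\ref{SE:k} sets
\[
k = 2\left\lceil \frac{\log(4n-7)}{2\log(1+\e)} \right\rceil,
\]
so that $k = O\!\left(\log_{1+\e} n\right)$, since $\log(4n-7) = O(\log n)$ and dividing by $2\log(1+\e)$ converts the base-$2$ logarithm into a base-$(1+\e)$ logarithm up to a constant factor. Line~\ref{SE:k'} sets $k' = k+2$, so $k'$ has the same asymptotic order as $k$.

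Next I would invoke Theorem~\ref{thm:time2}, which guarantees that each of the two calls $\proc{Hsequence}(A,k)$ and $\proc{Hsequence}(A,k')$ in lines~\ref{SE:H}--\ref{SE:H'} runs in $O(n^\omega \log k)$ and $O(n^\omega \log k')$ time respectively. Substituting the bound $k, k' = O(\log_{1+\e} n)$ gives
\[
\log k = O\!\left(\log \log_{1+\e} n\right),
\]
so each call costs $O\!\left(n^\omega \log \log_{1+\e} n\right)$. The remaining operations in $\proc{SpectralExpansion}$—reading the size $n$, the comparisons $H \geq 0$ and $H' \geq 0$, the computation of $\lambda = \sqrt{H'/H} + \sqrt{H/H'}$, and the final comparison against $2+\e$—are all a constant number of scalar arithmetic operations and dominate nothing. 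Adding up the two dominant matrix-exponentiation calls therefore yields the claimed total of $O\!\left(n^\omega \log \log_{1+\e} n\right)$.

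I do not anticipate a genuine obstacle here, since the heavy lifting was already done in Theorem~\ref{thm:time1} (and inherited by Theorem~\ref{thm:time2}); the argument is essentially a substitution. The one point requiring mild care is the translation between logarithmic bases: I would make explicit that $\log k = \log\!\left(O(\log_{1+\e} n)\right) = O(\log \log_{1+\e} n)$, treating $\e$ as a fixed parameter so that the additive constant absorbed into the $O$-notation (coming from $\log$ of the constant factor and the ceiling) does not affect the order. One should also implicitly assume $n$ is large enough that $4n-7 > 1$ and $\log_{1+\e} n \geq 2$, so that the iterated logarithm is well defined and positive; this is harmless for an asymptotic statement.
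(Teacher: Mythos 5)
Your proposal is correct and follows essentially the same route as the paper's proof: bound $k=O(\log_{1+\e} n)$ from line~\ref{SE:k}, apply Theorem~\ref{thm:time2} to the two $\proc{Hsequence}$ calls to get $O(n^\omega \log\log_{1+\e} n)$, and observe the remaining lines take $O(1)$ time. Your added care about the change of logarithmic base and well-definedness of the iterated logarithm is harmless elaboration of the same argument.
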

\begin{proof}
By line \ref{SE:k} the value 
$k=2\left \lceil \frac{\log_{1+\e} (4n-7)}{2} \right \rceil=O(\log_{1+\e} n)$.
Hence the lines \ref{SE:H} and \ref{SE:H'} take $O(n^\omega \log\log_{1+\e} n)$ time by Theorem \ref{thm:time2}. The other lines run in $O(1)$ time. The result follows.
\end{proof}

We have a remark on the procedure $\proc{SpectralExpansion}$. 
In the case of $H<0$ and $H'< 0$, if the parameter $\e$ is small enough $\proc{SpectralExpansion}(A,\e)$ gives a nice estimation of $\mu(X)$ theoretically; however, if $\e$ is not small enough the procedure $\proc{SpectralExpansion}(A,\e)$ perhaps returns incorrect information.

\subsection*{Acknowledgements}
The research is supported by the Ministry of Science and Technology of Taiwan under the project MOST 106-2628-M-008-001-MY4.

\bibliographystyle{amsplain}
\bibliography{MP}

\providecommand{\bysame}{\leavevmode\hbox to3em{\hrulefill}\thinspace}
\providecommand{\MR}{\relax\ifhmode\unskip\space\fi MR }
% \MRhref is called by the amsart/book/proc definition of \MR.
\providecommand{\MRhref}[2]{%
  \href{http://www.ams.org/mathscinet-getitem?mr=#1}{#2}
}
\providecommand{\href}[2]{#2}
\begin{thebibliography}{10}

\bibitem{Expander1986}
N.~Alon, \emph{{Eigenvalues and expanders}}, Combinatorica \textbf{6} (1986),
  83--96.

\bibitem{Expander-Mix1988}
N.~Alon and F.~R.~K. Chung, \emph{{Explicit construction of linear sized
  tolerant networks}}, Discrete Mathematics \textbf{72} (1988), 15--19.

\bibitem{Basse:1992}
H.~Bass, \emph{{The Ihara--Selberg zeta function of a tree lattice}},
  International Journal of Mathematics \textbf{6} (1992), 717--797.

\bibitem{CWalgorithm:1990}
D.~Coppersmith and S.~Winograd, \emph{{Matrix multiplication via arithmetic
  progressions}}, Journal of Symbolic Computation \textbf{9} (1990), 251--280.

\bibitem{Ramanujan2003}
G.~Davidoff, P.~Sarnak, and A.~Valette, \emph{{Elementary number theory, group
  theory and Ramanujan graphs}}, Cambridge University Press, Cambridge, 2003.

\bibitem{Demmel1997}
J.~W. Demmel, \emph{{Applied Numerical Linear Algebra}}, SIAM, Philadelphia,
  1997.

\bibitem{edge-ratio1984}
J.~Dodziuk, \emph{{Difference equations, isoperimetric inequality and
  transience of certain random walks}}, Transactions of the American
  Mathematical Society \textbf{284} (1984), 787--794.

\bibitem{Alon-Boppana1993}
J.~Friedman, \emph{{Some geometric aspects of graphs and their
  eigenfunctions}}, Duke Mathematical Journal \textbf{69} (1993), 487--525.

\bibitem{CWalgorithm:2014}
F.~Le Gall, \emph{{Powers of tensors and fast matrix multiplication}},
  Proceedings of the 39th International Symposium on Symbolic and Algebraic
  Computation (2014), 296--303.

\bibitem{Hashimoto:1989}
K.~Hashimoto, \emph{{Zeta functions of finite graphs and representations of
  $\mathfrak{p}$-adic groups, Automorphic forms and geometry of arithmetic
  varieties}}, Advanced Studies in Pure Mathematics \textbf{15} (1989),
  211--280.

\bibitem{expander2006}
S.~Hoory, N.~Linial, and A.~Wigderson, \emph{{Expander graphs and their
  applications}}, Bulletin of the American Mathematical Society \textbf{43}
  (2006), 439–561.

\bibitem{Huang:Li_criterion}
H.-W. Huang, \emph{{Ihara Zeta function, coefficients of Maclaurin series and
  Ramanujan graphs}}, International Journal of Mathematics \textbf{31} (2020),
  2050082 (10 pages).

\bibitem{Winnie2018}
H.-W. Huang and W.-C.~W. Li, \emph{{Unramified graph covers of finite degree}},
  {Connections in discrete mathematics: A celebration of the work of Ron
  Graham} (Cambridge, United Kingdom) (S.~Butler, J.~Cooper, and G.~Hurlbert,
  eds.), Cambridge University Press, 2018, pp.~104--124.

\bibitem{Ihara:1966}
Y.~Ihara, \emph{{On discrete subgroups of the two by two projective linear
  group over $\mathfrak p$-adic fields}}, Journal of the Mathematical Society
  of Japan \textbf{18} (1966), 219--235.

\bibitem{Winnie2019}
W.-C.~W. Li, \emph{{Zeta and $L$-functions in number theory and
  combinatorics}}, Regional Conference Series in Mathematics 129, American
  Mathematical Society, Providence, RI, 2019.

\bibitem{Expander1994}
A.~Lubotzky, \emph{{Discrete groups, expanding graphs and invariant measures}},
  Birkh{\"a}user, Basel, 1994.

\bibitem{Ramanujan1988}
A.~Lubotzky, R.~Phillips, and P.~Sarnak, \emph{{Ramanujan graphs}},
  Combinatorica \textbf{8} (1988), 261--277.

\bibitem{Alon-Boppana1991}
A.~Nilli, \emph{{On the second eigenvalue of a graph}}, Discrete Mathematics
  \textbf{91} (1991), 207--210.

\bibitem{Serre1980}
J.-P. Serre, \emph{{Trees}}, Springer-Verlag Berlin Heidelberg, 1980,
  translated from the French by John Stillwell.

\bibitem{Sunada:1986}
T.~Sunada, \emph{{$L$-functions in geometry and some applications}}, {Curvature
  and Topology of Riemannian Manifolds} (Berlin) (K.~Shiohama, T.~Sakai, and
  T.~Sunada, eds.), Lecture Notes in Mathematics 1201, Springer, 1986,
  pp.~266--284.

\bibitem{Terras2010}
A.~A. Terras, \emph{{Zeta functions of graphs: A stroll through the garden}},
  Cambridge Studies in Advanced Mathematics 128, Cambridge University Press,
  Cambridge, UK; New York, 2010.

\bibitem{CWalgorithm:2012}
V.~V. Williams, \emph{{Multiplying matrices faster than coppersmith-winograd}},
  Proceedings of the 44th Annual ACM Symposium on Theory of Computing (2012),
  887--898.

\end{thebibliography}

\end{document}